


\documentclass{article}
\usepackage{amsmath,amsthm}
\usepackage{amssymb}

\newtheorem{theorem}{Theorem}[section]

\newtheorem{lemma}[theorem]{Lemma}
\newtheorem{corollary}[theorem]{Corollary}


\theoremstyle{definition}

\newtheorem{remark}[theorem]{Remark}
\newtheorem{assumption}[theorem]{Assumption}




\def\Mc{{\cal M}}

\def\Fc{{\cal F}}

\def \a{\alpha}
\def \b{\beta}
\def \g{\gamma}

\def \l{\lambda}

\def \e{\varepsilon}
\newcommand\R{{\mathbb{R}}}

\begin{document}

\title{From small markets to big markets}

\author{Laurence Carassus
\thanks{L\'{e}onard de Vinci P\^ole Universitaire, Research Center, 92 916 Paris La D\'{e}fense, France and 
LMR, FRE 2011 Universit\'{e} de Reims-Champagne Ardenne, France.
E-mail: laurence.carassus@devinci.fr}
 \and Mikl\'os R\'asonyi\thanks{Alfr\'ed R\'enyi Institute of
Mathematics, Budapest, Hungary.
E-mail: rasonyi@renyi.mta.hu}}

\maketitle

\begin{abstract}
We study the most famous example of a large financial market: the Arbitrage Pricing Model, where investors can trade
in a one-period setting with countably many assets admitting a factor structure. We consider the problem of maximising expected utility in this
setting. Besides establishing the existence of optimizers under weaker assumptions than previous papers,
we go on studying the relationship
between optimal investments in finite market segments and those in the whole market. We show that certain natural
(but nontrivial) continuity rules hold: maximal satisfaction, reservation prices and (convex combinations of) optimizers computed
in small markets converge to their respective counterparts in the big market.
\end{abstract}

Keywords: {Arbitrage Pricing Theory, Large markets, Maximisation of expected utility.}

MSC classification:{Primary 93E20, 91B70, 91B16; Secondary 91G10, 46B09.}

\section{Introduction}

Arbitrage Pricing Theory (APT) was conceived by \cite{ross} in order to derive the conclusions of Capital Asset Pricing Model
(see \cite{lintner,sharpe})
from alternative assumptions. These remarkable conclusions had a huge bearing on empirical work
but they somehow overshadowed the highly inventive model suggested in \cite{ross}. 

Mathematical
finance subsequently took up the idea of a market with countably many assets and
the theory of large financial markets was founded in \cite{kk1} and further
developed in e.g.\ \cite{kk2,iw1,iw2,klein,josef}, just to mention a few. For the sake
of generality, continuous trading was assumed in the overwhelming majority of related papers
which, again, eclipsed the original setting of \cite{ross}. 

While the arbitrage theory of the large financial markets has been worked out in \cite{kk1,kk2}
satisfactorily in continuous time, other crucial topics -- such as utility maximization or superreplication --
brought about only dubious conclusions and unsettled questions. Portfolios
in finitely many assets were considered in the above references and a natural definition for strategies
involving possibly all the assets was missing. Generalized portfolios were introduced
(see \cite{paolo,josef,oleksii}) as suitable limits of portfolios with finitely many
assets. They lacked, however, a clear economic interpretation. In the APT (and, for the moment,
only in that model)  \cite{jmaa} introduces a straightforward concept of portfolios
in infinitely many assets which we will use in the present paper. In \cite{CR18} it is proved that 
assuming absence of arbitrage in all of the small markets and under integrability conditions, the no arbitrage 
condition stated with infinitely many assets also holds true. In the same paper, the authors obtain a  
dual representation of the superreplication cost of a contingent claim. 

In this paper, we investigate  the existence of optimizers for utility functions  on the whole real line (the  positive
real axis case was treated in \cite{CR18}) and we relax some rather stringent conditions imposed in \cite{jmaa,ijtaf}. 
From both a theoretical and a computational viewpoint it is crucial to clarify the relationship
between optimal investment in the finite markets and those in the whole market. 

In our setup, 
it is expected that the value functions in finite markets perform asymptotically as well
as the value function in the large market. 
Considering utility indifference prices, these should also
converge as the number of assets increases. While these facts are intuitive, no formal
justification has been provided so far. We prove these facts in Theorem \ref{zut} and Corollary \ref{prixut} below. 
We also prove that certain convex combinations of the optimal portfolios in finite markets perform asymptotically as well
as the overall optimizer. 

Asymptotic results for superhedging and mean-variance hedging have been obtained in \cite{baran,campi}. In the
utility maximization context the first such result is Theorem 5.3 of \cite{jmaa} where it was shown
that there exists a sequence of strategies in finite markets whose values converge to the optimal
value. That paper, however, assumed
that asset price changes may take arbitrarily large negative and positive values which is a rather strong requirement.
Under the more relaxed conditions of the present work we also show the existence of such sequence, moreover,
they can be chosen to be averages of finite market optimizers, see Theorem \ref{matural} below.

Section \ref{lmm} presents the model and recalls some useful results from \cite{CR18}. 
Section \ref{secut} contains the main contributions: existence of utility maximization and the asymptotics from
small markets to big markets.

\section{The large market model}
\label{lmm}

Let $({\Omega}, \Fc, P)$ be a probability space. We consider a two stage Arbitrage Pricing Model.
For any $i \geq 1$, let the return on asset $i$ be given by
\begin{eqnarray*}
R_i &=& \bar{\beta}_i(\varepsilon_i-b_i),\quad 1\leq i\leq m;\\
R_i &=& \sum_{j=1}^m\beta_i^j(\varepsilon_j-b_j)+\bar{\beta}_i(\varepsilon_i-
b_i),\quad i>m,
\end{eqnarray*}
where the $(\varepsilon_i)_{i \geq 1}$ are random variables and $(\bar{\beta}_i)_{i \geq 1}, (b_i)_{i \geq 1},(\beta_i^j)_{i >m, 1 \leq j \leq m}$ are
constants. We refer to \cite{kk1,def, ijtaf} for further discussions on the model.
\begin{assumption}
\label{un}
The $(\e_i)_{i \geq 1}$ are square-integrable, independent random variables satisfying
\begin{eqnarray*}
E(\varepsilon_i)=0,\quad E\left(\varepsilon_i^2\right)=1,\quad i\geq 1.
\end{eqnarray*}
\end{assumption}
We consider
strategies using potentially  infinitely many assets and  belonging  to
$$\ell_2:=\left\{(h_i)_{i\geq 1}, \, h_{i}\in\mathbb{R},\, i\geq 1,\,  \sum_{i=1}^{\infty}h_i^2<\infty\right\}$$ which is an Hilbert space with the norm $||h||_{\ell_2}:=\sqrt{\sum_{i=1}^{\infty}h_i^2}$. \\
Let $L^2(\Omega, \Fc,P):=\{X:\Omega \to \R, \, E|X|^2< \infty\}$ (denoted by $L^2(P)$ from now on), which is again a Hilbert space with the norm $||X||_{L^2}:=\sqrt{E(|X|^2)}$.
For  $h \in \ell_2,$  let $\Phi(h):=\sum_{i=1}^{\infty}h_i\e_i,$ where 
the infinite sum in $\Phi(h)$ has to be understood as the limit in $L^2(P)$ of the finite sequences $(\sum_{i=1}^{n}h_i\e_i)_{n \geq 1}$.
Then $\Phi$ is an isometry from $\ell_2$ to $L^2(P).$
\begin{assumption}\label{b}
We have $\|b\|_{\ell_2}<\infty$.
\end{assumption}
Under Assumption \ref{b}, we have (see (5) in \cite{CR18}) that 
\begin{equation}
\label{isol}
E\left(\left(\sum_{i=1}^{\infty}h_i(\e_i-b_i)\right)^2\right) \leq  (1+\|b\|_{\ell_2}^2) \| h\|^2_{\ell_2}<\infty,
 \end{equation}
and we may consider again the infinite sum $\langle h, \e-b\rangle:=\sum_{i=1}^{\infty}h_i(\e_i -b_i)$. Note that 
$$E(|\langle h, \e-b\rangle |) \leq\sqrt{E\left(\langle h, \e-b\rangle \right)^2}\leq  \sqrt{1+\|b\|_{\ell_2}^2} \| h\|_{\ell_2}.$$
The (self-financed) value at time $1$ that can be attained starting from $x$  and using a strategy $h$ in $\ell_2$  
with infinitely many assets is given by 
$$
V^{x,h}:= x+\langle h, \e-b\rangle.{}
$$

\begin{assumption}
\label{AOAfini}
For all $i \geq 1$,
$$P(\e_i >b_i)>0 \mbox{ and } P(\e_i <b_i)>0.$$
\end{assumption}
Fix $N\geq 1$. Using Lemma 3.4 in \cite{CR18}, under Assumptions \ref{un} and \ref{AOAfini}, there exists some
${\alpha}_{N}\in (0,1)$ such that for every $(h_1,\ldots, h_N) \in \R^N$ satisfying $\sum_{i=1}^N h_i^2=1$ we have 
\begin{eqnarray}
\label{aoapetitNalpha}
P\left(\sum_{i=1}^N h_i(\e_i - b_i)<-{\alpha}_{N}\right)>{\alpha}_{N}.
\end{eqnarray}
This condition is the so called quantitative no-arbitrage condition on any ``small market''  with $N$ 
random sources and it is well-known that this condition is equivalent to the existence of a equivalent martingale measure
for the finite market with assets $R_{1},\ldots,R_{N}$ (see \cite{dmw} and \cite{follmer-schied}). 

However, we need the existence of martingale measures for the whole market and even sufficient
integrability of the martingale density. We say that
EMM2 holds true if
\begin{eqnarray}
\label{mmset}
\Mc_ 2:=\left\{Q \sim P, \, \frac{dQ}{dP} \in L ^2(P), \, E_{Q} (\e_i)=b_i, \, \forall i\geq 1\right\}\neq \emptyset.
\end{eqnarray}

Unfortunately, Assumptions \ref{un}, \ref{b} and \ref{AOAfini} are not known to be sufficient for ensuring that EMM2  
holds true (see Proposition 4 of \cite{def}). Hence we also need the following technical condition.

\begin{assumption}\label{trois}
We have that
\begin{equation}\label{harom}
\sup_{i\geq 1} E\left[|\varepsilon_i|^3\right]<\infty.
\end{equation}
\end{assumption}

\begin{lemma} Under
Assumptions \ref{un}, \ref{AOAfini} and \ref{trois},
\begin{equation}
\label{eqaaa}
\mbox{ Assumption}  \, \ref{b}
\Longleftrightarrow \mbox{ EMM2.}
\end{equation}
\end{lemma}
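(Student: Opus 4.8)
This is an equivalence, so the plan is to prove the two implications separately; the forward one, Assumption \ref{b} $\Rightarrow$ EMM2, is the substantial one.

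\emph{EMM2 $\Rightarrow$ Assumption \ref{b}.} This direction is immediate and uses only Assumption \ref{un}: by that assumption $(\e_i)_{i\ge1}$ is an orthonormal system in $L^2(P)$, so if $Q\in\Mc_2$ has density $Z:=dQ/dP\in L^2(P)$ then the numbers $b_i=E_Q(\e_i)=E(Z\e_i)=\langle Z,\e_i\rangle_{L^2}$ are the Fourier coefficients of $Z$ along this system, and Bessel's inequality gives $\|b\|_{\ell_2}^2=\sum_i b_i^2\le\|Z\|_{L^2}^2<\infty$.

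\emph{Assumption \ref{b} $\Rightarrow$ EMM2.} Assume $\|b\|_{\ell_2}<\infty$; in particular $b_i\to0$. I would construct $Q$ by tilting the law of each $\e_i$ independently, i.e.\ I look for strictly positive functions $g_i=g_i(\e_i)$ with $E(g_i)=1$ and $E(g_i\e_i)=b_i$, and set $Z:=\prod_{i\ge1}g_i$, realized as the a.s.\ and $L^2$ limit of the martingale $Z_n:=\prod_{i=1}^n g_i$ for the filtration generated by the $\e_i$. For $Z$ to lie in $L^2(P)$ one needs $\prod_iE(g_i^2)<\infty$, i.e.\ $E(g_i^2)-1=O(b_i^2)$; since only third moments are assumed, exponential tilts are unavailable, so I would use \emph{truncated linear} tilts. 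Concretely, set $M:=2\sup_iE|\e_i|^3<\infty$ (finite by Assumption \ref{trois}) and $\eta_i:=(\e_i\wedge M)\vee(-M)$; then $\eta_i\e_i\ge\e_i^2\mathbf{1}_{\{|\e_i|\le M\}}$, so by the truncation estimate $\s_i:=E(\eta_i\e_i)\in[1/2,1]$ \emph{uniformly in $i$}. Taking $g_i:=1+(b_i/\s_i)(\eta_i-E\eta_i)$ gives $E(g_i)=1$, $E(g_i\e_i)=b_i$, $E(g_i^2)\le1+4b_i^2$, and $|g_i-1|\le4M|b_i|$, hence $g_i>0$ for all $i$ beyond some index $i_0$ past which $|b_i|$ is small. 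For the finitely many indices $i<i_0$ I would instead take any strictly positive \emph{bounded} density $g_i(\e_i)$ with $E(g_i)=1$ and $E(g_i\e_i)=b_i$: its existence under Assumption \ref{AOAfini} follows from the one-period no-arbitrage theory (cf.\ \cite{dmw}) and can be seen directly by an intermediate-value argument along the affine family of convex combinations of $(1-t)\,\mathbf{1}_{\{\e_i>b_i\}}/P(\e_i>b_i)+t$ and $(1-t)\,\mathbf{1}_{\{\e_i<b_i\}}/P(\e_i<b_i)+t$ for a fixed small $t>0$.

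It then remains to check that $Z:=\prod_{i\ge1}g_i$ works. Independence together with $E(g_i^2)\le1+4b_i^2$ for $i\ge i_0$ gives $\sup_nE(Z_n^2)=\prod_iE(g_i^2)<\infty$ (the tail factors satisfy $\prod_{i\ge i_0}(1+4b_i^2)\le\exp(4\|b\|_{\ell_2}^2)$, the finitely many remaining ones being finite), so $Z_n\to Z$ a.s.\ and in $L^2(P)$, with $E(Z)=1$ and $Z\in L^2(P)$. Strict positivity of $Z$ — hence $Q:=Z\cdot P\sim P$ — follows from $\log Z_n=\sum_{i\le n}\log g_i$: the finitely many terms with $i<i_0$ are a.s.\ finite, while for $i\ge i_0$ both $\sum_i(g_i-1)$ converges a.s.\ (Kolmogorov's theorem, as $\sum_i\mathrm{Var}(g_i)\le4\|b\|_{\ell_2}^2<\infty$) and $\sum_i(g_i-1)^2$ converges a.s.\ (its expectation is the same sum of variances), so the Taylor expansion of $\log$ makes $\log Z_n$ converge a.s.\ to a finite limit. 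Finally, using $Z_n\to Z$ in $L^2$ and $\e_j\in L^2$, $E_Q(\e_j)=E(Z\e_j)=\lim_nE(Z_n\e_j)=E(g_j\e_j)\prod_{i\ne j}E(g_i)=b_j$ for every $j$, so $Q\in\Mc_2$ and EMM2 holds. The hard part is precisely this forward construction: exhibiting a single density that is simultaneously equivalent to $P$, square-integrable, and a martingale measure for \emph{all} (countably many) assets; the decisive point — and the only place Assumption \ref{trois} is used — is the uniform bound $\s_i\ge1/2$, which makes the per-coordinate $L^2$-cost $O(b_i^2)$ and hence summable against $\|b\|_{\ell_2}^2<\infty$.
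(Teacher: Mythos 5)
Your proof is correct. Note that the paper itself does not prove this lemma: it simply cites it as Corollary~1 of \cite{def}, so strictly speaking there is no in-paper argument to compare against, and what you have produced is a self-contained proof of that cited result. The backward implication via Bessel's inequality for the orthonormal system $(\e_i)$ is the standard one-liner. For the substantial forward direction you give an explicit product construction $Z=\prod_i g_i$ of the martingale density, with two distinct regimes: for the cofinitely many indices with $|b_i|$ small you use a \emph{truncated} linear tilt $g_i=1+(b_i/\s_i)(\eta_i-E\eta_i)$, where the truncation at level $M=2\sup_i E|\e_i|^3$ yields the uniform lower bound $\s_i\ge 1/2$ (and hence $E(g_i^2)\le 1+4b_i^2$ and $|g_i-1|\le 4M|b_i|$) without needing any moments beyond the third; for the finitely many remaining indices you fall back on a one-period DMW/intermediate-value argument under Assumption~\ref{AOAfini}. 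The $L^2$-boundedness of the partial products follows from $\sum b_i^2<\infty$, martingale convergence gives $Z\in L^2$ with $E(Z)=1$, and you correctly establish strict positivity of $Z$ (hence $Q\sim P$) via a.s.\ convergence of $\sum\log g_i$ using Kolmogorov's theorem and the square-summability of $g_i-1$. Finally $E(Z\e_j)=b_j$ follows from $L^2$-convergence of $Z_n$. All of this checks out; the key insight you correctly isolate is that the third-moment bound is precisely what makes the per-coordinate $L^2$-cost $O(b_i^2)$ and hence summable. Since \cite{def} almost certainly proceeds by a similar independent-tilting construction (that is the natural route for an APT model), you have essentially reconstructed the cited proof rather than found a genuinely new one, but the write-up is complete and correct where the paper merely points to the literature.
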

\begin{proof}
This is Corollary 1 of \cite{def}.
\end{proof}

Lemma \ref{aoaquant} below  asserts that the 
quantitative no arbitrage condition, mentioned above, holds true in the large market, too. 

\begin{lemma}
\label{aoaquant}
Assume that Assumptions \ref{un}, \ref{b}, \ref{AOAfini} and \ref{trois} hold true.
Then there exists some $\a>0$, such that  for all $h \in \ell_2$ satisfying $\|h\|_{\ell_2}=1$
 $$P(\langle h,\e\rangle <-\a) > \a.$$
\end{lemma}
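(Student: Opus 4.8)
The plan is to reduce the infinite-dimensional statement to the finite-dimensional quantitative no-arbitrage condition \eqref{aoapetitNalpha}, using the $L^2$-control \eqref{isol} to handle the tail. Fix $N\geq 1$ (to be chosen later) and write, for $h\in\ell_2$ with $\|h\|_{\ell_2}=1$,
\[
\langle h,\e-b\rangle \;=\; \sum_{i=1}^N h_i(\e_i-b_i) \;+\; \sum_{i>N} h_i(\e_i-b_i) \;=:\; S_N(h) + T_N(h).
\]
The idea is that on the event where the head $S_N(h)$ is very negative and the tail $T_N(h)$ is not too large, the whole sum $\langle h,\e-b\rangle$ is negative, and hence so is $\langle h,\e\rangle = \langle h,\e-b\rangle + \langle h,b\rangle$ up to the bounded correction $|\langle h,b\rangle|\leq \|b\|_{\ell_2}$ coming from Cauchy--Schwarz. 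The difficulty is that $h$ ranges over an infinite-dimensional sphere, so neither the ``head mass'' $\sum_{i\le N}h_i^2$ nor the ``tail mass'' $\sum_{i>N}h_i^2$ is bounded away from $0$ uniformly; I will split into the two cases accordingly.

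First I would treat the case where the tail is small, say $\sum_{i>N}h_i^2 \leq \eta$ for a threshold $\eta$ to be fixed. Then $\widehat h := (h_1,\dots,h_N)/\|(h_1,\dots,h_N)\|$ is a unit vector in $\R^N$, and by \eqref{aoapetitNalpha} the event $A := \{\sum_{i=1}^N \widehat h_i(\e_i-b_i) < -\a_N\}$ has $P(A)>\a_N$; since $\|(h_1,\dots,h_N)\| \geq \sqrt{1-\eta}$, on $A$ we get $S_N(h) < -\a_N\sqrt{1-\eta}$. On the other hand, by \eqref{isol} applied to the tail vector, $E(T_N(h)^2)\leq (1+\|b\|_{\ell_2}^2)\eta$, so by Chebyshev the event $B := \{|T_N(h)| \leq \a_N\sqrt{1-\eta}/2\}$ has $P(B) \geq 1 - 4(1+\|b\|_{\ell_2}^2)\eta/(\a_N^2(1-\eta))$; choosing $\eta$ small (depending on $\a_N$, $N$, $\|b\|_{\ell_2}$) makes $P(B)\geq 1-\a_N/2$, so $P(A\cap B)\geq \a_N/2$, and on $A\cap B$ we have $\langle h,\e-b\rangle < -\a_N\sqrt{1-\eta}/2$, hence $\langle h,\e\rangle < -\a_N\sqrt{1-\eta}/2 + \|b\|_{\ell_2}$. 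This last bound is only useful if $\a_N\sqrt{1-\eta}/2 > \|b\|_{\ell_2}$, which is the crux: I would first pick $N$ large enough that $\a_N$ is large compared to $\|b\|_{\ell_2}$ — and here is where I must be careful, since \eqref{aoapetitNalpha} only gives $\a_N\in(0,1)$ with no monotonicity in $N$. The honest route is instead to work directly on the whole market: replace $\langle h,b\rangle$ by noting $\langle h,\e\rangle = \langle h,\e-b\rangle + \langle h,b\rangle$ and observe that $\langle h,b\rangle$ is a constant (not random), so I actually want a quantitative no-arbitrage statement for $\langle h,\e-b\rangle$ uniformly in $h$, i.e. the same argument but with target $-\a$ for a fixed small $\a>0$ and the constant shift absorbed only when $|\langle h,b\rangle|$ is itself small; for $h$ with $|\langle h,b\rangle|$ not small one argues separately. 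I expect the cleanest packaging is: prove $P(\langle h,\e-b\rangle<-\a)>\a$ uniformly (this is the real content), then handle $\langle h,b\rangle$ by a compactness/continuity argument on the relevant part of the sphere.

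For the complementary case, $\sum_{i>N}h_i^2 > \eta$, the tail carries a definite fraction of the mass, and I would like a lower bound on $P(T_N(h) < -\text{const})$ uniformly over such $h$; but this is again an infinite-dimensional no-arbitrage statement, so a naive recursion loops. The right move is to avoid the case split on a fixed $N$ and instead invoke a \emph{compactness} argument: the unit sphere of $\ell_2$ is weakly compact, the map $h\mapsto \langle h,\e-b\rangle$ is continuous from $(\ell_2,\text{weak})$ into $L^2(P)$ (indeed it is the isometry $\Phi$ composed with a bounded perturbation, using \eqref{isol}), and the functional $h\mapsto P(\langle h,\e-b\rangle < -\a) - \a$ is — after suitable lower-semicontinuity considerations — bounded below away from $0$ on the sphere once one knows it is positive pointwise. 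Pointwise positivity for each fixed $h\in\ell_2$ with $\|h\|_{\ell_2}=1$ follows by approximating $h$ by its truncations $h^{(N)}=(h_1,\dots,h_N,0,\dots)$, applying \eqref{aoapetitNalpha} to $h^{(N)}/\|h^{(N)}\|$, and using $L^2$-convergence $\langle h^{(N)},\e-b\rangle \to \langle h,\e-b\rangle$ from \eqref{isol} together with convergence in probability; the constant $\a$ obtained may depend on $h$ a priori, and promoting it to a uniform constant is precisely where weak compactness of the sphere plus semicontinuity of $h\mapsto P(\langle h,\e-b\rangle\leq -\a)$ does the work. Finally, once $P(\langle h,\e-b\rangle<-\a_0)>\a_0$ holds uniformly for some $\a_0>0$, set $\a := \min(\a_0/2,\ \dots)$ after absorbing the deterministic shift $\langle h,b\rangle$: on $\{\langle h,\e-b\rangle < -\a_0\}$ one has $\langle h,\e\rangle < -\a_0 + \|b\|_{\ell_2}$, which is not automatically negative, so strictly speaking the lemma as stated needs $\a_0 > \|b\|_{\ell_2}$ or a refinement; I would therefore look to extract from the proof of \eqref{aoapetitNalpha} a version with a prescribed (small) threshold rather than a fixed $\a_N$, making $\a_0$ adjustable, and then the conclusion with $\a := \a_0/2$ follows cleanly.

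The main obstacle, as the above indicates, is uniformity in $h$ over the infinite-dimensional sphere: the finite-market constants $\a_N$ are not controlled across $N$, and a direct two-region split on a fixed $N$ does not close. I expect the decisive tool to be weak compactness of the $\ell_2$ unit ball combined with continuity of $h\mapsto\langle h,\e-b\rangle$ into $L^2(P)$ (hence in probability) from \eqref{isol}, reducing the uniform bound to a pointwise one that is in turn handled by truncation and \eqref{aoapetitNalpha}; the bookkeeping of the deterministic term $\langle h,b\rangle$ is a secondary but genuine point that forces one to carry an adjustable threshold through the finite-dimensional argument.
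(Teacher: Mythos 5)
The paper's own ``proof'' is a citation to Proposition~3.11 of \cite{CR18}, so there is no argument in the text to compare against; but your sketch, taken on its own terms, has a genuine gap at exactly the spot you lean on to rescue it. The weak-compactness step fails for two reasons. First, the unit \emph{sphere} of $\ell_2$ is not weakly compact: the closed unit ball is weakly compact, but the sphere is not weakly closed (the standard basis vectors $e_n$ converge weakly to $0$, which lies off the sphere; and at $0$ the desired inequality is vacuously false). Second, and more importantly, $\Phi$ (and hence $h\mapsto \langle h,\e-b\rangle$) is \emph{not} continuous from $(\ell_2,\text{weak})$ to $L^2(P)$ in norm. An isometry between Hilbert spaces is only weak-to-weak continuous, and weak convergence in $L^2$ gives neither convergence in probability nor any useful semicontinuity of $h\mapsto P(\langle h,\e-b\rangle<-\a)$. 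The basis vectors again give a counterexample: $e_n\rightharpoonup 0$ weakly while $\langle e_n,\e-b\rangle=\e_n-b_n$ has $E(\e_n-b_n)^2=1+b_n^2\ge1$, so $P(\e_n-b_n<-\a)$ stays bounded away from $0$ even though the ``limit'' random variable is $0$. So the functional you want to be lower semicontinuous on a compact set is neither semicontinuous nor defined on a compact set, and the uniformisation does not close.

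There is also a secondary misdirection: you spend effort worrying about the deterministic shift $\langle h,b\rangle$, but the lemma is about $\langle h,\e\rangle$, not $\langle h,\e-b\rangle$, and under $P$ one has $E\langle h,\e\rangle=0$ and $E\langle h,\e\rangle^2=\|h\|_{\ell_2}^2=1$ by Assumption~\ref{un} and independence. This already suggests the decisive tool is not finite-market no-arbitrage at all but a uniform moment bound. Indeed, set $Y:=\langle h,\e\rangle$ for $\|h\|_{\ell_2}=1$; then $EY=0$, $EY^2=1$, and by Lemma~\ref{miki} (applied with $b=0$, using Assumption~\ref{trois}) $E|Y|^3\le C$ for a constant $C$ independent of $h$. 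Cauchy--Schwarz, $E(Y^2)=E(|Y|^{1/2}|Y|^{3/2})\le (E|Y|)^{1/2}(E|Y|^3)^{1/2}$, gives $E|Y|\ge 1/C$, hence $E(Y^-)=E|Y|/2\ge 1/(2C)$; and the Paley--Zygmund inequality applied to $Y^-$ (noting $E((Y^-)^2)\le E(Y^2)=1$) gives $P\bigl(Y^->\tfrac12E(Y^-)\bigr)\ge\tfrac14(E(Y^-))^2\ge 1/(16C^2)$, hence $P(Y<-1/(4C))\ge 1/(16C^2)$. Choosing $\a:=\min\bigl(1/(4C),\,1/(32C^2)\bigr)$ yields $P(\langle h,\e\rangle<-\a)>\a$ uniformly. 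This is the role of the third-moment hypothesis~\eqref{harom}: it replaces any compactness argument by a direct, $h$-uniform anti-concentration bound, and it explains why \eqref{aoapetitNalpha} with its uncontrolled constants $\a_N$ need not (and should not) be the starting point.
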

\begin{proof}
This is Proposition 3.14 in \cite{CR18}.
\end{proof}
\begin{remark}
\label{nulla} {\rm
If $Q\in\mathcal{M}_2$ is such that $dQ/dP\in L^2$ and if Assumption \ref{b} holds true then
$E_Q\left(V^{0,h}\right)=0$ for all $h \in \ell_2$, see Remark 3.4  of \cite{CR18}.}
\end{remark}

Lemma \ref{miki} below will be used 
in the proofs of Theorems \ref{csonti} and \ref{zut} in order to show uniform integrability.   
\begin{lemma}
\label{miki}
Assume that Assumptions \ref{un} and  \ref{b}  hold true and that 
$\sup_{i \geq 1} E|\varepsilon_i|^{\gamma}<\infty$ for some $\gamma\geq 2$. Then there is a constant $C_{\gamma}$
such that, for all $h\in\ell_2$
$$
E|\langle h,\varepsilon-b\rangle|^{\gamma}\leq C_{\gamma}\Vert h\Vert_{\ell_2}^{\gamma}(1+\Vert b\Vert_{\ell_2}^{\gamma}).
$$
\end{lemma}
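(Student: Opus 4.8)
The plan is to establish the moment bound by first reducing to a statement about the sum $\sum_i h_i \e_i$ via the triangle inequality in $L^\gamma$, and then controlling that sum with a square-function / martingale argument. Concretely, I would write $\langle h,\e-b\rangle = \Phi(h) - \langle h, b\rangle$, so that $\|\langle h,\e-b\rangle\|_{L^\gamma} \le \|\Phi(h)\|_{L^\gamma} + |\langle h,b\rangle|$, and the second term is at most $\|h\|_{\ell_2}\|b\|_{\ell_2}$ by Cauchy--Schwarz. Thus everything reduces to showing $E|\Phi(h)|^\gamma \le C_\gamma \|h\|_{\ell_2}^\gamma$ for a constant depending only on $\gamma$ and $\sup_i E|\e_i|^\gamma$.

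To bound $E|\Phi(h)|^\gamma$, I would use the independence of the $(\e_i)$ together with Assumption \ref{un} (zero mean, unit variance), which makes the partial sums $S_n := \sum_{i=1}^n h_i \e_i$ a martingale. The Burkholder--Davis--Gundy inequality then gives $E|S_n|^\gamma \le C_\gamma E\left(\sum_{i=1}^n h_i^2 \e_i^2\right)^{\gamma/2}$. For $\gamma = 2$ this is exactly an equality up to the isometry; for $\gamma > 2$, I would apply the triangle inequality in $L^{\gamma/2}(P)$ to the sum $\sum_i h_i^2 \e_i^2$, obtaining $\|\sum_i h_i^2\e_i^2\|_{L^{\gamma/2}} \le \sum_i h_i^2 \|\e_i^2\|_{L^{\gamma/2}} = \sum_i h_i^2 (E|\e_i|^\gamma)^{2/\gamma} \le (\sup_i E|\e_i|^\gamma)^{2/\gamma} \|h\|_{\ell_2}^2$. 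Raising to the power $\gamma/2$ yields $E|S_n|^\gamma \le C_\gamma (\sup_i E|\e_i|^\gamma) \|h\|_{\ell_2}^\gamma$, uniformly in $n$.

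Finally I would pass to the limit $n \to \infty$. We already know $S_n \to \Phi(h)$ in $L^2(P)$, hence along a subsequence almost surely; by Fatou's lemma $E|\Phi(h)|^\gamma \le \liminf_n E|S_n|^\gamma \le C_\gamma(\sup_i E|\e_i|^\gamma)\|h\|_{\ell_2}^\gamma$. Combining with the Cauchy--Schwarz estimate on $\langle h,b\rangle$ and using $(a+b)^\gamma \le 2^{\gamma-1}(a^\gamma+b^\gamma)$ gives a bound of the form $C_\gamma'\|h\|_{\ell_2}^\gamma(1 + \|b\|_{\ell_2}^\gamma)$, after absorbing $\sup_i E|\e_i|^\gamma$ into the constant.

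The main obstacle is really just bookkeeping: making sure the BDG constant depends only on $\gamma$ (which is standard) and that the reduction from $\sum h_i^2 \e_i^2$ to $\sup_i E|\e_i|^\gamma$ via Minkowski's inequality in $L^{\gamma/2}$ is valid — this needs $\gamma/2 \ge 1$, which is guaranteed by the hypothesis $\gamma \ge 2$. One could alternatively avoid BDG and use the Marcinkiewicz--Zygmund inequality directly, which is perhaps cleaner since it is stated precisely for sums of independent mean-zero random variables; either route gives the same constant structure. No subtlety arises in the limiting step since Fatou only requires the almost-sure convergence along a subsequence, which the $L^2$ convergence already provides.
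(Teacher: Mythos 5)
Your proof is correct. The paper itself gives no argument for Lemma \ref{miki}; it simply cites Lemma 3.7 of \cite{CR18}, so there is no in-text proof to compare against. Your route is the standard one for this kind of moment bound: isolate the deterministic drift $\langle h,b\rangle$ via Cauchy--Schwarz, bound the martingale $S_n=\sum_{i=1}^n h_i\e_i$ by Burkholder--Davis--Gundy (or, equivalently for sums of independent centered variables, Marcinkiewicz--Zygmund), control the square function $\sum_i h_i^2\e_i^2$ by Minkowski's inequality in $L^{\gamma/2}$ — which is exactly where $\gamma\ge 2$ enters — and pass to the limit with Fatou along an a.s.-convergent subsequence of $S_n$. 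Each step is valid and the constants depend only on $\gamma$ and $\sup_i E|\e_i|^\gamma$, as required. A marginally more compact packaging is to invoke Rosenthal's inequality directly,
\[
E\Bigl|\sum_i h_i\e_i\Bigr|^\gamma\le C_\gamma\Bigl(\sum_i|h_i|^\gamma E|\e_i|^\gamma+\Bigl(\sum_i h_i^2\Bigr)^{\gamma/2}\Bigr),
\]
and then use the embedding $\ell_2\hookrightarrow\ell_\gamma$ (valid since $\gamma\ge 2$) to bound $\sum_i|h_i|^\gamma\le\Vert h\Vert_{\ell_2}^\gamma$; this folds your BDG-plus-Minkowski steps into a single citation and produces the same constant structure. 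Either way, the argument is sound.
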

\begin{proof}
This is Lemma 3.10 in \cite{CR18}.
\end{proof}
\begin{remark}
\label{remui}
Let $0<\l < \g$ and $c>0$. Fix $h \in \ell_2, \, \|h\|_{\ell_2}\leq c$. Using Assumption \ref{trois}, Holder inequality and Lemma \ref{miki}, we get that 
for any $A  \in \Fc$,
\begin{eqnarray*}
E(|V^{x,h}|^{\l} 1_A) & \leq & 2^{\l -1} |x|^{\l} P(A) + 2^{\l -1}E(|\langle h,\varepsilon-b\rangle|^{\l} 1_A) \\
& \leq & 2^{\l -1} |x|^{\l} P(A) + 2^{\l -1}(E(|\langle h,\varepsilon-b\rangle|^{\g}))^{\l/\g} (P(A))^{1/q} \\
& \leq & 2^{\l -1} |x|^{\l} P(A) + 2^{\l -1}c^\l(C_{\gamma}(1+\Vert b\Vert_{\ell_2}^{\gamma}))^{\l/\g} (P(A))^{1/q},
\end{eqnarray*}
where $q$ is the conjugate of $\g/\l$. 
So an important consequence of Assumption \ref{trois} is that for any $c>0$ and $0<\l <3$ 
$\{|V^{x,h}|^{\l}, \,h \in \ell_2, \, \|h\|_{\ell_2}\leq c \}$ is  uniformly integrable.
\end{remark}
We finally recall an important concept of functional analysis. A Banach space $\mathbb{B}$
has the \emph{Banach-Saks property} if, for every norm-bounded sequence $\xi_{n}\in\mathbb{B}$, $n\in\mathbb{N}$,
there exists a subsequence $n_{k}$, $k\in\mathbb{N}$ such that the corresponding arithmetic means
$$
\frac{\xi_{n_{0}}+\ldots+\xi_{n_{k-1}}}{k}
$$
converge in the norm of $\mathbb{B}$. It was proved in \cite{banach-saks} that $L_{p}$
spaces have this property. In the present paper we will apply this result in
the Hilbert space $\ell_{2}$.

\section{Utility maximisation}
\label{secut}

It is standard (see \cite{von}) to model economic agents' preferences by
concave increasing utility functions $U$. So suppose that $U:\mathbb{R}\to\mathbb{R}$ is a  concave
strictly increasing   differentiable function and that  for some $x_0 \in \R$
\begin{eqnarray}
\label{norma}
U(x_0)=0 \mbox{ and } U'(x_0)=1.
\end{eqnarray}
For a claim $G \in  L^0$  and $x \in \mathbb{R} $, we define
$$\mathcal{A}(U,G,x):=\left\{ h \in \ell_2,\;E U^{-}(V^{x,h}-G)<+\infty \right\}.$$
Define  the supremum of expected utility at the terminal date when delivering a contingent claim $G$, starting
from initial wealth $x \in \R$, by
\begin{eqnarray}
\label{gnon}
u(G,x):=\sup_{h \in\mathcal{A}(U,G,x)}EU(V^{x,h}-G).
\end{eqnarray}

The following assumptions will be needed in Theorems \ref{csonti} and \ref{zut}.
\begin{assumption}
\label{queuneg1}
There exists some constants $C_1 \in (0,\infty)$, $C_2 \in \mathbb{R}_{+}$
and $\beta>1$ such that for all $x \leq x_0$
$$|U(x)|\geq C_1|x|^{\beta} -C_2.$$
\end{assumption}

\begin{assumption}
\label{queuneg2}
There exists some constants $C_3 \in (0,\infty)$, $C_4 \in\mathbb{R}_{+}$ and $\gamma\geq \max(\beta,2)$ such that for all $x \in \R$
$$U^-(x)\leq C_3|x|^{\gamma} +C_4$$
and
\begin{equation}\label{harom2}
\sup_{i\geq 1} E\left[|\varepsilon_i|^{\gamma}\right]<\infty.
\end{equation}
\end{assumption}
\begin{assumption}
\label{intG}
We have $G \geq 0$ a.s.\ and it satisfies
$|E(U(x-G))|<+\infty$, for all $x \in \R$. \end{assumption}

\begin{remark} Assumption \ref{intG} is satisfied whenever $G$ is nonnegative, measurable and bounded.
Define
\[
U(x):=-\frac{1}{\delta}[ (x+1)^{-\delta}-1]
1_{\{x>0\}} -\frac{1}{\beta}[ (1-x)^{\beta}-1]1_{\{x\leq 0\}}
\]
for some $\beta\geq 2$ and $\delta>0$. Then $U$ is concave, strictly increasing, continuously differentiable and
it satisfies both Assumptions \ref{queuneg1}
and \ref{queuneg2} whenever $\sup_{i\geq 1} E\left[|\varepsilon_i|^{\beta}\right]<\infty$.	
Note that Assumption \ref{trois} implies \eqref{harom2} when $2 \leq \beta\leq 3$.
\end{remark}

\begin{remark} Let $U$ be concave, strictly increasing and differentiable, satisfying Assumptions \ref{queuneg1}, \ref{queuneg2}
and \ref{intG}. Then \eqref{norma} actually imposes no restriction on $U$.
Indeed, as $U$ cannot be constant, there exists $x_0 \in \R$ such that $U'(x_0)>0$. Define
$$V(x):=\frac{U(x)}{U'(x_0)}-\frac{U(x_0)}{U'(x_0)},$$ 
which
obviuosly satisfies \eqref{norma}. Moreover,
\begin{eqnarray*}
|V(x)| 
 & \geq & \frac{C_1}{U'(x_0)} |x|^{\beta} -\frac{C_2}{U'(x_0)} -\frac{|U(x_0)|}{U'(x_0)}, \; \; x \leq x_0\\
V^-(x)& \leq & \frac{C_3}{U'(x_0)} |x|^{\gamma} +\frac{C_4}{U'(x_0)} +\frac{U^+(x_0)}{U'(x_0)}, \; \; x \in \R \\
|E(V(x-G))| & \leq &  \frac{|E(U(x-G))|}{U'(x_0)} +\frac{|U(x_0)|}{U'(x_0)}<\infty. 
\end{eqnarray*}
So Assumptions \ref{queuneg1}, \ref{queuneg2} and \ref{intG} hold true for $V$. 
One may apply Theorems \ref{csonti}, \ref{zut} and Corollary \ref{prixut} below to $V$ and then these same results can be deduced for $U$, too. 
\end{remark}

The following lemmata will be used in the proofs of Theorems \ref{csonti} and \ref{zut}. 

\begin{lemma}
\label{toutva}  Let Assumption \ref{b} hold true and assume
$G \geq 0$ a.s. Then for all $y \in \R$ and $h \in \ell_2$
\begin{eqnarray}
\label{ilfaitbeau}
U^+(y+\langle h, \e-b\rangle-G)
& \leq & |x_0|+  |y+\langle h, \e-b\rangle|.
\end{eqnarray}
\end{lemma}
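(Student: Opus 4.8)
The plan is to exploit three facts: that $U$ is increasing, that near $x_0$ the normalisation \eqref{norma} and concavity give a linear upper bound, and that $G\ge 0$ only helps (removing it can only increase $U^+$ since $U$ is increasing).

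First I would dispose of the claim $G$: since $G \geq 0$ a.s.\ and $U$ is increasing, $U^+(y+\langle h,\e-b\rangle - G) \leq U^+(y+\langle h,\e-b\rangle)$, so it suffices to bound $U^+(z)$ by $|x_0| + |z|$ for every real $z$ (with $z = y + \langle h,\e-b\rangle$). For $z \leq x_0$ this is trivial because then $U(z) \leq U(x_0) = 0$, hence $U^+(z) = 0 \leq |x_0| + |z|$. For $z > x_0$, concavity of $U$ together with $U(x_0)=0$ and $U'(x_0)=1$ yields the tangent-line bound $U(z) \leq U(x_0) + U'(x_0)(z - x_0) = z - x_0$; since $z > x_0$ we have $U^+(z) = U(z) \leq z - x_0 \leq |z| + |x_0|$.

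Combining the two cases gives $U^+(z) \leq |x_0| + |z|$ for all $z \in \R$, and substituting $z = y + \langle h,\e-b\rangle$ (which is well defined as an element of $L^2(P)$ under Assumption \ref{b} by \eqref{isol}) together with the claim reduction yields \eqref{ilfaitbeau}.

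There is no real obstacle here; the only mild subtlety is making sure the tangent-line inequality is invoked correctly — it is the standard fact that a concave differentiable function lies below each of its tangents — and noting that the normalisation \eqref{norma} is exactly what makes the tangent at $x_0$ equal to $z \mapsto z - x_0$, so that no further constants are needed beyond $|x_0|$. The hypothesis that $G \geq 0$ is used only through monotonicity of $U$, and Assumption \ref{b} is needed solely to guarantee that $\langle h,\e-b\rangle$ makes sense as a random variable.
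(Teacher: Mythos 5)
Your proof is correct and uses essentially the same strategy as the paper: drop $G$ by monotonicity and positivity of $G$, then bound $U^+(z)$ by $|x_0|+|z|$ via the case split $z\le x_0$ (where $U^+(z)=0$) versus $z>x_0$ (where the tangent line at $x_0$, normalised by \eqref{norma}, gives $U(z)\le z-x_0$). The paper phrases the split with indicator functions rather than two cases, but the argument is the same.
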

\begin{proof}
As $U$ is increasing, concave and differentiable, recalling \eqref{norma}, we get  for all $y \in \R$,
\begin{eqnarray*}
U(y) & \leq & U(\max(x_0,y))\leq U(x_0)+ \max(y-x_0,0) U\rq{}(x_0)\\
 & \leq & \max(y-x_0,0)\leq |y-x_0|\leq |y| +  |x_0|.
\end{eqnarray*}
Let $h \in \ell_2$, we get that
\begin{eqnarray}
\nonumber
& & U^+(y+\langle h, \e-b\rangle-G)
\leq    U^+(y+\langle h, \e-b\rangle) \\
\nonumber
&  \leq &  U^+(y+\langle h, \e-b\rangle) 1_{y+\langle h, \e-b\rangle \geq x_0} +  U^+(x_0) 1_{y+\langle h, \e-b\rangle <x_0} \\ 
\nonumber
& = &  U(y+\langle h, \e-b\rangle) 1_{y+\langle h, \e-b\rangle \geq x_0}   \leq |x_0|+|y+\langle h, \e-b\rangle|.
\end{eqnarray}
\end{proof}

Lemma \ref{hborne} asserts that an optimal solution for \eqref{gnon} must be bounded. 
 \begin{lemma}
\label{hborne}
Assume that  Assumptions \ref{un}, \ref{b},  \ref{AOAfini}, \ref{trois}, \ref{queuneg1} and \ref{intG}  hold true.
Let $x \in \R.$  There exists some constant $M_{x,G} >0$ such that if $h \in \ell_2$ satisfies 
$$\|h\|_{\ell_2}> M_{x,G}$$
then the $0$ strategy performs better than $h$, that is,
$$
EU(x-G) > EU(x+\langle h,\e-b\rangle-G).
$$
\end{lemma}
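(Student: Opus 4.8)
The plan is to show that the expected utility of any strategy $h$ with large $\ell_2$-norm is strictly dominated by that of the $0$ strategy. The intuition is that the utility gained on the ``good'' event where $\langle h,\e-b\rangle$ is large and positive grows only linearly in $\|h\|_{\ell_2}$ (by concavity, via Lemma \ref{toutva}), whereas the loss incurred on the ``bad'' event, which by the quantitative no-arbitrage condition (Lemma \ref{aoaquant}) has probability bounded below by $\a$ and on which $\langle h,\e-b\rangle$ is very negative, grows like $\|h\|_{\ell_2}^\b$ with $\b>1$ (by Assumption \ref{queuneg1}). For $\|h\|_{\ell_2}$ large enough the superlinear loss overwhelms the linear gain plus the fixed baseline $EU(x-G)$, which is finite by Assumption \ref{intG}.

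First I would dispose of the trivial case: if $h\notin\mathcal{A}(U,G,x)$, i.e.\ $EU^-(x+\langle h,\e-b\rangle-G)=+\infty$, then $EU(x+\langle h,\e-b\rangle-G)=-\infty$ (the positive part being integrable by Lemma \ref{toutva} together with \eqref{isol}) and the claim is immediate. So assume $h\in\mathcal{A}(U,G,x)$. Write $r=\|h\|_{\ell_2}$ and $\tilde h = h/r$, so $\|\tilde h\|_{\ell_2}=1$. By Lemma \ref{aoaquant} there is $\a>0$ with $P(\langle\tilde h,\e\rangle<-\a)>\a$; since $\langle h,\e-b\rangle = r\langle\tilde h,\e\rangle - r\langle\tilde h,b\rangle$ and $|\langle\tilde h,b\rangle|\le\|b\|_{\ell_2}$, on the event $A_h:=\{\langle\tilde h,\e\rangle<-\a\}$ we have $\langle h,\e-b\rangle < -r\a + r\|b\|_{\ell_2}$. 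This bound is useful once $\a > \|b\|_{\ell_2}$; if $\|b\|_{\ell_2}\ge\a$ one instead uses Markov/Chebyshev (via Lemma \ref{miki} with $\g=2$, i.e.\ \eqref{isol}) to find a slightly smaller event of probability at least, say, $\a/2$ on which $\langle h,\e-b\rangle < -c\,r$ for a suitable constant $c>0$ and all $r$ large; either way we obtain constants $c>0$, $p>0$ and, for every $h$ with $r$ large, an event $B_h$ with $P(B_h)\ge p$ on which $x+\langle h,\e-b\rangle - G \le x + \langle h,\e-b\rangle < x - cr < x_0$ (using $G\ge0$ and taking $r$ large enough that $x-cr<x_0$).

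Next I estimate the two sides. On $B_h$, Assumption \ref{queuneg1} gives $U(x+\langle h,\e-b\rangle-G) = -|U(\cdot)| \le -C_1|x+\langle h,\e-b\rangle-G|^\b + C_2 \le -C_1(cr-|x|)^\b + C_2$ once $cr>|x|$; integrating over $B_h$ yields $\int_{B_h} U(x+\langle h,\e-b\rangle-G)\,dP \le p\big(-C_1(cr-|x|)^\b + C_2\big)$. On the complement, $U(x+\langle h,\e-b\rangle-G)\le U^+(x+\langle h,\e-b\rangle-G)\le |x_0| + |x+\langle h,\e-b\rangle|$ by Lemma \ref{toutva}, so $\int_{B_h^c} U(x+\langle h,\e-b\rangle-G)\,dP \le |x_0| + |x| + E|\langle h,\e-b\rangle| \le |x_0| + |x| + r\sqrt{1+\|b\|_{\ell_2}^2}$ by the Cauchy--Schwarz bound recorded after \eqref{isol}. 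Adding the two, $EU(x+\langle h,\e-b\rangle - G) \le -pC_1(cr-|x|)^\b + pC_2 + |x_0| + |x| + r\sqrt{1+\|b\|_{\ell_2}^2}$. The dominant term $-pC_1(cr-|x|)^\b$ is of order $-r^\b$ with $\b>1$, so the whole right-hand side tends to $-\infty$ as $r\to\infty$; in particular it is eventually strictly below the finite constant $EU(x-G)$. Choosing $M_{x,G}$ to be any value of $r$ beyond which all the ``for $r$ large'' clauses hold and the displayed bound is $<EU(x-G)$ completes the proof.

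The main obstacle is the bookkeeping in the second step: getting a single event $B_h$, valid for all sufficiently large $\|h\|_{\ell_2}$ with a uniform lower bound on its probability and a linear-in-$r$ lower bound on $|\langle h,\e-b\rangle|$ there, while handling the possibility that $\|b\|_{\ell_2}$ is not small relative to $\a$. This is the only genuinely delicate point; everything else (the integrability of $U^+$, the fixed baseline, the final limit computation) is routine given Lemmata \ref{toutva}, \ref{aoaquant}, \ref{miki} and Assumptions \ref{queuneg1}, \ref{intG}.
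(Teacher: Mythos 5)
Your overall strategy (quantitative no-arbitrage gives a uniformly sized ``bad event'' on which the loss scales linearly in $\|h\|_{\ell_2}$, Assumption \ref{queuneg1} turns that into a superlinear penalty, Lemma \ref{toutva} caps the gain linearly, hence $0$ eventually dominates) is the right one and matches the paper's. The final estimation step is essentially identical. However, the construction of the bad event has a genuine gap exactly where you flag it as ``the only genuinely delicate point.''

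Your bound on $A_h=\{\langle\tilde h,\e\rangle<-\a\}$ is $\langle h,\e-b\rangle < -r\a - r\langle\tilde h,b\rangle$, which is only useful when $\langle\tilde h,b\rangle > -\a$; in particular it can fail whenever $\|b\|_{\ell_2}\ge\a$. Your proposed repair via Markov/Chebyshev through \eqref{isol} or Lemma \ref{miki} cannot work: those inequalities give \emph{upper} bounds on tail probabilities of $\langle h,\e-b\rangle$, while you need a uniform \emph{lower} bound on the probability of a large loss. Moreover, the shift $\langle\tilde h,b\rangle$ is deterministic, so no sub-event of $A_h$ can escape it, and since Assumption \ref{AOAfini} does not force the $\e_i$ to be unbounded below, $P(\langle\tilde h,\e\rangle<-M)$ can be $0$ for $M$ beyond $\a$; there is no uniform lower bound on deeper tails that could absorb the drift. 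Lemma \ref{aoaquant} concerns $\langle h,\e\rangle$, not $\langle h,\e-b\rangle$, and bridging that gap is exactly the nontrivial content you are missing.

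The paper closes this gap with a head–tail split rather than a global normalization. Since $b\in\ell_2$, choose $n_\a$ so that $\|(b_i)_{i>n_\a}\|_{\ell_2}\le\a/2$, and write $h=\underline h+\overline h$ accordingly. For the head $\underline h$, use the \emph{finite}-market quantitative no-arbitrage \eqref{aoapetitNalpha}, which is stated directly for $\sum_{i\le n_\a}h_i(\e_i-b_i)$ and therefore already absorbs the drift, giving an event $A$ with $P(A)>\a_{n_\a}$ and $\langle\underline h,\e-b\rangle<-\a_{n_\a}\|\underline h\|_{\ell_2}$ on $A$. For the tail $\overline h$, use Lemma \ref{aoaquant} to get $B$ with $P(B)>\a$ and $\langle\overline h,\e\rangle\le-\a\|\overline h\|_{\ell_2}$; the residual drift $\langle\overline h,\overline b\rangle$ is then controlled by $\|\overline b\|_{\ell_2}\|\overline h\|_{\ell_2}\le(\a/2)\|\overline h\|_{\ell_2}$ by the choice of $n_\a$. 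Independence of the $\e_i$ gives $P(A\cap B)>\a_{n_\a}\a$, and on $A\cap B$ one obtains $\langle h,\e-b\rangle<-\overline\a(\|\underline h\|_{\ell_2}+\|\overline h\|_{\ell_2})$ with $\overline\a=\min(\a_{n_\a},\a/2)$. This is the uniform bad event your argument needs, and once you have it the rest of your write-up goes through unchanged.
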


\begin{proof}
Let $x \in \R$ and $h \in \ell_2$. 
Recall $\a>0$ from Lemma \ref{aoaquant}. 
As $b \in \ell_2$, there exists some $n_{\a}\geq 1$ such that $\left(\sum_{i \geq n_{\a}+1} b_i^2\right)^{1/2} \leq \alpha/2$. 
Let 
\begin{eqnarray*}
\underline h:=(h_1,\ldots,h_{n_{\a}},0, \ldots,) &\mbox{ and }& \underline b=:(b_1,\ldots,b_{n_{\a}}, 0,\ldots,)\\
\overline h:=(0,\ldots,0,h_{n_{\a}+1}, \ldots,) & \mbox{ and }& \overline b=:(0,\ldots,0,b_{n_{\a}+1}, \ldots,).
\end{eqnarray*}
From the no-arbitrage condition in the market with ${n_{\a}} $  assets (see  \eqref{aoapetitNalpha}) there exits $\a_{n_{\a}}$ such that $P(A)>\a_{n_{\a}},$ where 
$A:=\left\{\sum_{i=1}^{n_{\a}} h_i(\e_i-b_i)<-\a_{n_{\a}}\|\underline h\|_{\ell_2}\right\}$. 
Let 
$B:=\left\{\sum_{i \geq n_{\a}+1} h_i\e_i\leq -\a \|\overline h\|_{\ell_2}  \right\}$ then $P(B)>\a$ (recall Lemma \ref{aoaquant}). 
As the $(\e_i)_{i\geq 1}$ are independent, we get that
$P(A\cap B)=P(A)P(B)>\a_{n_{\a}} \a$. 
On $A\cap B$,
 \begin{eqnarray*}
\langle h,\e-b\rangle  & = & \langle\underline h,\e-b\rangle + 
\langle\overline h,\e-b\rangle 
 \leq   -\a_{n_{\a}}\|\underline h\|_{\ell_2} -\a \|\overline h\|_{\ell_2} 
- \langle\overline h,\overline b\rangle  \\
 & \leq &  -\a_{n_{\a}}\|\underline h\|_{\ell_2} -\a \|\overline h\|_{\ell_2} +
 \|\overline b\|_{\ell_2}\|\overline h\|_{\ell_2}\\
 & \leq &  -\a_{n_{\a}}\|\underline h\|_{\ell_2} -\a \|\overline h\|_{\ell_2} +
 \a/2\|\overline h\|_{\ell_2} \leq  -\overline{\a} (\|\underline h\|_{\ell_2}+\|\overline h\|_{\ell_2}),
 \end{eqnarray*}
where $\overline{\a}=\inf(\a_{n_{\a}},\a/2)$. Thus 
 $P(\langle h,\e-b\rangle <-\overline{\a} (\|\underline h\|_{\ell_2}+\|\overline h\|_{\ell_2}))>\a_{n_{\a}} \a$. 
Assume that  $\|\underline h\|_{\ell_2}+\|\overline h\|_{\ell_2} \geq \max \left( \frac{x-x_0}{\overline{\a}},\frac{|x|}{\overline{\a}}\right)$. Then applying Lemma \ref{toutva} and Assumption \ref{queuneg1}, we get that
 \begin{eqnarray*}
EU(V^{x,h}-G) & \leq  &  EU(x+\langle h,\e-b\rangle) \\
& \leq  &
E\left(U(x+\langle h,\e-b\rangle)1_{\langle h,\e-b\rangle <-\overline{\a} (\|\underline h\|_{\ell_2}+\|\overline h\|_{\ell_2})}\right) +\\
 & &
 E\left(U^+(x+\langle h,\e-b\rangle)1_{\langle h,\e-b\rangle \geq -\overline{\a} (\|\underline h\|_{\ell_2}+\|\overline h\|_{\ell_2})}\right)\\
  & \leq & U(x-\overline{\a} (\|\underline h\|_{\ell_2}+\|\overline h\|_{\ell_2})) \a_{n_{\a}} \a + |x_0|+ E|x+\langle\underline h,\e-b\rangle + 
\langle\overline h,\e-b\rangle|\\
  & \leq & U(x-\overline{\a} (\|\underline h\|_{\ell_2}+\|\overline h\|_{\ell_2})) \a_{n_{\a}} \a + |x_0|+|x|+\|\underline h\|_{\ell_2}\sqrt{1+ \|\underline b\|^2_{\ell_2}}\\
  & & 
  +\|\overline h\|_{\ell_2}\sqrt{1+ \|\overline b\|^2_{\ell_2}}
  \\
  & \leq & \left(-C_1 \left|\overline{\a} (\|\underline h\|_{\ell_2}+\|\overline h\|_{\ell_2})-x\right|^{\beta}+ C_2 \right)\a_{n_{\a}} \a + |x_0|+|x|\\
   & &+(\|\underline h\|_{\ell_2} +\|\overline h\|_{\ell_2})\sqrt{1+ \|b\|^2_{\ell_2}}\\
  & \leq & \left(-C_1 \overline{\a} ^{\beta} (\|\underline h\|_{\ell_2}+\|\overline h\|_{\ell_2})^{\beta}+ C_2 \right)\a_{n_{\a}} \a + |x_0|+|x|\\
   & &(\|\underline h\|_{\ell_2}+\|\overline h\|_{\ell_2})\sqrt{1+ \|b\|^2_{\ell_2}}.
 \end{eqnarray*}
because $U(x-\overline{\a} (\|\underline h\|_{\ell_2}+\|\overline h\|_{\ell_2})) \leq U(x_0) =0$ and
\begin{eqnarray*}\left|\overline{\a} (\|\underline h\|_{\ell_2}+\|\overline h\|_{\ell_2})-x\right|^{\beta} &\geq& \left|\overline{\a} (\|\underline h\|_{\ell_2}+\|\overline h\|_{\ell_2})-|x|\right|^{\beta}
=\left(\overline{\a} (\|\underline h\|_{\ell_2}+\|\overline h\|_{\ell_2})-|x|\right)^{\beta} \\
&\geq &\overline{\a} ^{\beta} (\|\underline h\|_{\ell_2}+\|\overline h\|_{\ell_2})^{\beta}.
\end{eqnarray*}
Assume that
 \begin{eqnarray*}
(\|\underline h\|_{\ell_2}+\|\overline h\|_{\ell_2})\sqrt{1+ \|b\|^2_{\ell_2}} -\frac{C_1}2 \a_{n_{\a}} \a\overline{\a}^{\beta}(\|\underline h\|_{\ell_2}+\|\overline h\|_{\ell_2})^{\beta}  <  0\\
-\frac{C_1}2 \overline{\a}^{\beta}\a_{n_{\a}} \a (\|\underline h\|_{\ell_2}+\|\overline h\|_{\ell_2})^{\beta} + |x_0|  + |x| + C_2\a_{n_{\a}} \a  < -|EU(x-G)| \leq EU(x-G),
 \end{eqnarray*}
which is true if $\|\underline h\|_{\ell_2}+\|\overline h\|_{\ell_2}> \overline M_{x,G},$ where 
\begin{eqnarray*}
 \overline M_{x,G}  &:= &
\max \left(\left(2\frac{|x_0| +|x| + C_2\a_{n_{\a}} \a +|E(U(x-G))|}{{C_1} \a_{n_{\a}} \a \overline{\a}^{\beta}}\right)^{\frac1{\b}},
\left(2\frac{\sqrt{1+ \|b\|^2_{\ell_2}}}{{C_1} \a_{n_{\a}} \a \overline{\a}^{\beta}}\right)^{\frac1{\b-1}}\right).
\end{eqnarray*}
Then, setting
$M_{x,G}:=\max \left(\frac{x-x_0}{\overline \a}, \frac{|x|}{\overline \a},\overline  M_{x,G}\right),$ if 
$\|\underline h\|_{\ell_2}+\|\overline h\|_{\ell_2} >{M_{x,G}},$
  \begin{eqnarray}\label{lajta}
EU(V^{x,h}-G) & <  & EU(x-G)
 \end{eqnarray}
so the strategy $0$ performs better than $h$. It follows that 
$\| h\|_{\ell_2} >M_{x,G}$ implies \eqref{lajta} since
$$\| h\|_{\ell_2}=\left(\|\underline h\|^2_{\ell_2} + \|\overline h\|^2_{\ell_2}\right)^{\frac12}  \leq 
\|\underline h\|_{\ell_2} + \|\overline h\|_{\ell_2}.$$
\end{proof}

Now we present our first main result. We establish the existence of an optimizer for the utility maximization problem.
In \cite{ijtaf} this was shown assuming uniformly bounded exponential moments for the $\varepsilon_{i}$.
In \cite{jmaa} the moment condition was weak but it was assumed that all the $\varepsilon_{i}$
take arbitrarily large negative and positive values. Here
we do not need the latter assumption and merely assume \eqref{harom} and \eqref{harom2}.
 
\begin{theorem}\label{csonti}
Assume that  Assumptions \ref{un}, \ref{b},  \ref{AOAfini}, \ref{trois}, \ref{queuneg1}, \ref{queuneg2} and \ref{intG} hold true. Let $x\in \R$. There exists
$h^*\in\mathcal{A}(U,G,x)$ such that
$$
u(G,x)=EU(V^{x,h^*}-G).
$$
\end{theorem}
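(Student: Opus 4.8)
The plan is to establish existence of an optimizer via a compactness argument in the weak topology of $\ell_2$, combined with upper semicontinuity of the functional $h\mapsto EU(V^{x,h}-G)$.

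First I would observe that by Lemma \ref{hborne} the supremum in \eqref{gnon} is unchanged if we restrict to the ball $\{h\in\ell_2:\|h\|_{\ell_2}\le M_{x,G}\}$, since any strategy outside this ball is dominated by the zero strategy (which lies in $\mathcal{A}(U,G,x)$ by Assumption \ref{intG}). Take a maximizing sequence $(h^n)_{n\ge1}$ inside this ball. Since closed balls in a Hilbert space are weakly sequentially compact, after passing to a subsequence we may assume $h^n\rightharpoonup h^*$ weakly in $\ell_2$, with $\|h^*\|_{\ell_2}\le M_{x,G}$. By Mazur's lemma, suitable convex combinations $\tilde h^n$ of $\{h^k:k\ge n\}$ converge to $h^*$ strongly in $\ell_2$; these convex combinations still lie in the ball, and by concavity of $U$ one has $EU(V^{x,\tilde h^n}-G)\ge \min_{k\ge n}EU(V^{x,h^k}-G)\to u(G,x)$ (using Jensen on the concave functional, after checking the relevant integrals are well defined via the bound of Lemma \ref{toutva}). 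So $(\tilde h^n)$ is again a maximizing sequence, now converging strongly to $h^*$.

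Next I would pass to the limit. Strong convergence $\tilde h^n\to h^*$ in $\ell_2$ gives, by the isometry-type estimate \eqref{isol}, that $\langle \tilde h^n,\e-b\rangle\to\langle h^*,\e-b\rangle$ in $L^2(P)$, hence (along a further subsequence) almost surely, so $V^{x,\tilde h^n}-G\to V^{x,h^*}-G$ a.s. For the positive parts, Lemma \ref{toutva} bounds $U^+(V^{x,\tilde h^n}-G)$ by $|x_0|+|V^{x,\tilde h^n}|$, and since $\|\tilde h^n\|_{\ell_2}\le M_{x,G}$ the remark following Lemma \ref{miki} (invoked with exponent $\gamma$ from Assumption \ref{queuneg2}, using \eqref{harom2}) shows $\{|V^{x,\tilde h^n}|\}$ is uniformly integrable; thus $\{U^+(V^{x,\tilde h^n}-G)\}$ is uniformly integrable and $\limsup_n EU^+(V^{x,\tilde h^n}-G)\le EU^+(V^{x,h^*}-G)$ — actually convergence holds. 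For the negative parts I would apply Fatou's lemma: $EU^-(V^{x,h^*}-G)\le\liminf_n EU^-(V^{x,\tilde h^n}-G)$. Combining, $EU(V^{x,h^*}-G)\ge\limsup_n EU(V^{x,\tilde h^n}-G)=u(G,x)$, and since the reverse inequality is definitional once we know $h^*\in\mathcal{A}(U,G,x)$ (which follows because $EU^-(V^{x,h^*}-G)<\infty$, as it is a $\liminf$ of a sequence bounded above by $u(G,x)<\infty$ minus the integrable positive parts), we conclude $u(G,x)=EU(V^{x,h^*}-G)$.

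The main obstacle I anticipate is the control of the negative part of the utility along the maximizing sequence — in particular making sure the Fatou argument is legitimate and that $u(G,x)$ is genuinely finite so that $h^*\in\mathcal{A}(U,G,x)$. Finiteness of $u(G,x)$ should follow from Lemma \ref{toutva} together with the uniform integrability of $\{|V^{x,h}|:\|h\|_{\ell_2}\le M_{x,G}\}$, giving a uniform upper bound on $EU(V^{x,h}-G)$ over the relevant ball. A secondary technical point is verifying that the convex-combination step respects membership in $\mathcal{A}(U,G,x)$ and that Jensen's inequality can be applied despite $U$ possibly being unbounded below; here Assumption \ref{queuneg2} (polynomial growth of $U^-$) plus Lemma \ref{miki} provide the needed integrability to justify all manipulations.
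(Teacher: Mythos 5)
Your proof is correct and follows essentially the same path as the paper's: restrict to a bounded set via Lemma \ref{hborne}, extract strongly convergent convex combinations of a maximizing sequence (you use weak sequential compactness of the ball plus Mazur's lemma; the paper invokes the Banach--Saks property of $\ell_2$, which amounts to the same device here), exploit concavity of $U$ to see that the averaged sequence remains maximizing, and pass to the limit via uniform integrability of $U^+$ (Lemmas \ref{toutva} and \ref{miki}) together with Fatou for $U^-$, finally checking admissibility of the limit. The only cosmetic differences are Mazur vs.\ Banach--Saks and your clean $\min_{k\ge n}$ bound in place of the paper's Ces\`aro-mean argument.
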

\begin{proof}
 Let $x \in \R$ and
let $h_n\in\mathcal{A}(U,G,x)$ be a sequence such that
$$
EU(V^{x,h_n}-G)\uparrow u(G,x),\ n\to\infty.
$$
If $\|h_n\|_{\ell_2} >M_{x,G},$ 
 then using Lemma \ref{hborne}, we can replace $h_n$ by $0$ and still have a maximising sequence. So one can assume that $\sup_{n\in\mathbb{N}} \|h_n\|_{\ell_2}\leq M_{x,G}<\infty.$ 
Hence as $\ell_2$ has the Banach-Saks Property, there exists a subsequence $(n_k)_{k\geq 1}$ and some $h^*\in \ell_2$ such that for $\widetilde{h}_n:=\frac{1}{n}\sum_{k=1}^n h_{n_k}$
 $$\|\widetilde{h}_n-h^*\|_{\ell_2}\to 0, \,n\to\infty.$$
Using \eqref{isol}, we get that
\begin{eqnarray*}
E\langle \widetilde{h}_n-h^*,\varepsilon-b\rangle^2
& \leq & \| \widetilde{h}_n-h^*\|^2_{\ell_2}(1+\|b\|^2_{\ell_2})\to 0,
\end{eqnarray*}
when $n\to\infty$. In particular, $\langle \widetilde{h}_n-h^*,\varepsilon-b\rangle\to 0$, $n\to\infty$
in probability. Hence also $U(V^{x,\widetilde{h}_n}-G)\to U(V^{x,h^*}-G)$ in probability by continuity  of $U$.
We claim that the family $U^+(V^{x,\widetilde{h}_n}-G)$, $n\in\mathbb{N}$ is uniformly
integrable.
Indeed, from \eqref{ilfaitbeau}
$$
U^+(V^{x,\widetilde{h}_n}-G)\leq |x_0|+ |V^{x,\widetilde{h}_n}|.
$$
We know that $\sup_{n\in\mathbb{N}} \|\widetilde{h}_n\|_{\ell_2}\leq M_{x,G}<\infty$.
Hence from Assumption \ref{trois} (see Lemma \ref{miki} and Remark \ref{remui}), we get that  $\{U^+(V^{x,\widetilde{h}_n}-G), \, h_n\in \ell_2, \, \|\widetilde{h}_n\|_{\ell_2} \leq M_{x,G}\}$ is uniformly integrable. Fatou's lemma used for $-U^-$ implies that
$$E\left(-U^-(V^{x,h^*}-G)\right)\geq \limsup_{n\to\infty}E\left(-U^-(V^{x,\widetilde{h}_n}-G)\right),$$
and uniform integrability guarantees that
$$\lim_{n\to\infty}E\left(U^+(V^{x,\widetilde{h}_n}-G)\right)=
E\left(U^+(V^{x,{h}^*}-G)\right).$$
Thus, by concavity of $U$
$$
EU(V^{x,h^*}-G)\geq \limsup_{n\to\infty}EU(V^{x,\widetilde{h}_n}-G)\geq \lim_{n\to\infty}EU(V^{x,{h}_n}-G)=u(G,x),
$$
and the proof will be finished  as soon as we show $h^* \in \mathcal{A}(U,G,x)$. 
From Assumption \ref{queuneg2} and Lemma \ref{miki},
\begin{eqnarray}
\nonumber
EU^-(V^{x,\widetilde{h}_n}-G) &\leq &  C_3E|V^{x,\widetilde{h}_n}-G|^{\gamma} +C_4 \\
\nonumber
&\leq&  C_3\left(2^{\gamma-1}(|x|^{\gamma}+ E|<\widetilde{h}_n, \e-b>|^{\gamma})\right) +C_4\\
\label{admiss}
&\leq & C_3\left(2^{\gamma-1}\left(|x|^{\gamma}+C_{\gamma} M_{x,G}^{\gamma}
\left(1+\|b \|_{\ell_2}^{\gamma}\right)\right)\right) +C_4=:K.
\end{eqnarray}
Fatou's lemma used for $U^-$ implies that
\begin{eqnarray*} 
E\left(U^-(V^{x,h^*}-G)\right) & \leq & \liminf_{n\to\infty}E\left(U^-(V^{x,\widetilde{h}_n}-G)\right)  \leq K.
\end{eqnarray*}
\end{proof}

We consider now the problem of optimization in the small market $n$ with only the random sources $(\e_i)_{1 \leq i \leq n}.$ Let 
$$\mathcal{A}_n(U,G,x):=\left\{ h \in \ell_2,\; h_i=0, \, \forall i \geq n+1, \, E U^{-}(V^{x,h}-G)<+\infty \right\}.$$
Note that $\mathcal{A}_n(U,G,x) \subset \mathcal{A}_{n+1}(U,G,x) \subset \ldots \subset \mathcal{A}(U,G,x).$
We set for $n\in\mathbb{N}$
\begin{eqnarray}
\label{maxupetit}
u_n(G,x):=\sup_{h\in\mathcal{A}_n(U,G,x)}EU(V^{x,h}-G).
\end{eqnarray}

Now we arrive at the principal message of our paper: optimization problems in the small markets
behave consistently with those on the big market, in a natural way.

\begin{theorem}\label{matural}
\label{zut} Assume that Assumptions \ref{un}, \ref{b},  \ref{AOAfini}, \ref{trois},  \ref{queuneg1}, \ref{queuneg2} and \ref{intG} hold true. Then for each $x\in \R$, we have $u_n(G,x)\uparrow u(G,x)$, $n\to\infty$. \\
Let $h_n^*$ be an optimal solution for \eqref{maxupetit} \footnote{which exists by the argument of Theorem \ref{csonti}.}. Then there exists a subsequence $(n_k)_{k\geq 1}$ and some $\widehat h\in \ell_2,$ optimal solution of \eqref{gnon}, such that for $\widehat{h}_n:=\frac{1}{n}\sum_{k=1}^n h^*_{n_k}$, 
$$\|\widehat{h}_n-\hat h\|_{\ell_2}\to 0, \,n\to\infty.$$
\end{theorem}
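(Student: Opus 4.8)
The plan is to prove the two assertions of Theorem \ref{zut} in order: first the convergence $u_n(G,x)\uparrow u(G,x)$, and then the convergence of averaged optimizers.

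\textbf{Monotone convergence of the value functions.} Since $\mathcal{A}_n(U,G,x)\subset\mathcal{A}_{n+1}(U,G,x)\subset\mathcal{A}(U,G,x)$, the sequence $u_n(G,x)$ is nondecreasing and bounded above by $u(G,x)$; call its limit $\bar u\leq u(G,x)$. To get equality I would start from an $\e$-optimal strategy $h\in\mathcal{A}(U,G,x)$ with $EU(V^{x,h}-G)\geq u(G,x)-\e$ (and, invoking Lemma \ref{hborne}, one may assume $\|h\|_{\ell_2}\leq M_{x,G}$). Truncate it: $h^{(n)}:=(h_1,\dots,h_n,0,\dots)$. Then $\|h^{(n)}-h\|_{\ell_2}\to0$, so by \eqref{isol} $\langle h^{(n)},\e-b\rangle\to\langle h,\e-b\rangle$ in $L^2$, hence in probability, hence $U(V^{x,h^{(n)}}-G)\to U(V^{x,h}-G)$ in probability by continuity of $U$. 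Uniform integrability is handled exactly as in Theorem \ref{csonti}: the positive parts are dominated via Lemma \ref{toutva} by $|x_0|+|V^{x,h^{(n)}}|$ with $\sup_n\|h^{(n)}\|_{\ell_2}\leq M_{x,G}$, so Lemma \ref{miki} gives uniform integrability of $U^+(V^{x,h^{(n)}}-G)$, while for the negative parts Assumption \ref{queuneg2} together with Lemma \ref{miki} gives the uniform bound \eqref{admiss} (with $\widetilde h_n$ replaced by $h^{(n)}$), which in particular shows $h^{(n)}\in\mathcal{A}_n(U,G,x)$. Therefore $u_n(G,x)\geq EU(V^{x,h^{(n)}}-G)\to EU(V^{x,h}-G)\geq u(G,x)-\e$, and letting $\e\downarrow0$ yields $\bar u=u(G,x)$.

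\textbf{Averaged optimizers.} Let $h_n^*$ be an optimizer in market $n$ (existence by the argument of Theorem \ref{csonti} applied to $\mathcal{A}_n$). By Lemma \ref{hborne}, $\|h_n^*\|_{\ell_2}\leq M_{x,G}$ for all $n$, so $(h_n^*)$ is bounded in $\ell_2$. Since $\ell_2$ has the Banach--Saks property there is a subsequence $(n_k)$ and $\widehat h\in\ell_2$ with $\widehat h_n:=\frac1n\sum_{k=1}^n h_{n_k}^*\to\widehat h$ in $\ell_2$; note $\|\widehat h_n\|_{\ell_2}\leq M_{x,G}$ and hence $\|\widehat h\|_{\ell_2}\leq M_{x,G}$. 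As before, \eqref{isol} gives $\langle\widehat h_n-\widehat h,\e-b\rangle\to0$ in $L^2$ hence in probability, so $U(V^{x,\widehat h_n}-G)\to U(V^{x,\widehat h}-G)$ in probability. The uniform integrability of $U^+(V^{x,\widehat h_n}-G)$ follows from Lemma \ref{toutva} and Lemma \ref{miki} as above, and $\widehat h\in\mathcal{A}(U,G,x)$ from the bound \eqref{admiss} applied to $\widehat h_n$ plus Fatou for $U^-$. Combining Fatou for $-U^-$ with the uniform integrability of the positive parts gives
$$EU(V^{x,\widehat h}-G)\geq\limsup_{n\to\infty}EU(V^{x,\widehat h_n}-G).$$
It remains to bound the right-hand side below by $u(G,x)$. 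By concavity of $U$, $EU(V^{x,\widehat h_n}-G)\geq\frac1n\sum_{k=1}^n EU(V^{x,h_{n_k}^*}-G)=\frac1n\sum_{k=1}^n u_{n_k}(G,x)$; since $u_{n_k}(G,x)\uparrow u(G,x)$ by the first part, the Cesàro averages also converge to $u(G,x)$. Hence $EU(V^{x,\widehat h}-G)\geq u(G,x)$, so $\widehat h$ is an optimizer of \eqref{gnon}, which also forces $EU(V^{x,\widehat h_n}-G)\to u(G,x)$.

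\textbf{Main obstacle.} The substantive point, beyond bookkeeping, is the uniform boundedness in $\ell_2$ of the finite-market optimizers $h_n^*$ — this is precisely what makes Banach--Saks applicable — and it is exactly what Lemma \ref{hborne} supplies, with a bound $M_{x,G}$ independent of $n$. The only mild subtlety is that concavity must be applied to a convex combination of optimizers computed in \emph{different} markets; since $\mathcal{A}_{n_k}\subset\mathcal{A}_{n_\ell}$ for $k\leq\ell$ and all of them sit inside $\mathcal{A}(U,G,x)$, the combination $\widehat h_n$ is admissible in the big market and the inequality $EU(V^{x,\widehat h_n}-G)\geq\frac1n\sum_k u_{n_k}(G,x)$ is legitimate. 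Everything else is a repetition of the convergence-in-probability-plus-uniform-integrability scheme already used in Theorem \ref{csonti}.
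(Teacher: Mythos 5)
Your proposal follows essentially the same route as the paper. In the first part, the paper truncates the big-market optimizer $\widetilde h$ from Theorem \ref{csonti}; you truncate an $\e$-optimal strategy $h$. This is a legitimate and, if anything, slightly more self-contained variant (it does not even need existence of the optimizer for the first assertion). In the second part, you make explicit what the paper leaves as ``the arguments of Theorem \ref{csonti} apply verbatim,'' namely the concavity estimate $EU(V^{x,\widehat h_n}-G)\geq\frac1n\sum_{k=1}^n u_{n_k}(G,x)$ and the Ces\`{a}ro convergence of the right-hand side. That part is correct and matches the paper.

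There is one genuine (though repairable) gap in your first part, coming from a sign confusion about which pairing of ``Fatou $+$ UI'' is needed. In Theorem \ref{csonti} the goal is upper semicontinuity of $h\mapsto EU(V^{x,h}-G)$ along a convergent (averaged) sequence, and the correct pairing is Fatou for $-U^-$ together with uniform integrability of $U^+$. In the first assertion of Theorem \ref{zut} the required inequality is the opposite one, $\liminf_n EU(V^{x,h^{(n)}}-G)\geq EU(V^{x,h}-G)$ (lower semicontinuity), for which one needs Fatou for $U^+$ together with uniform integrability of $U^-$. You keep UI on $U^+$ (harmless, since it is stronger than Fatou), but for $U^-$ you only invoke the uniform $L^1$ bound \eqref{admiss}, which does not give $\limsup_n EU^-(V^{x,h^{(n)}}-G)\leq EU^-(V^{x,h}-G)$. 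Writing ``uniform integrability is handled exactly as in Theorem \ref{csonti}'' is therefore not quite right: Theorem \ref{csonti} never establishes UI of $U^-$. What is needed here is precisely the UI of the family $U^-(V^{x,h^{(n)}}-G)$, which the paper asserts and which one can obtain from the $L^\gamma$-convergence $\langle h^{(n)},\e-b\rangle\to\langle h,\e-b\rangle$ furnished by Lemma \ref{miki} applied to $h^{(n)}-h\to 0$ in $\ell_2$, combined with the growth bound of Assumption \ref{queuneg2}. Once you state UI (rather than a mere $L^1$ bound) for the negative parts, your ``$\to$'' in the chain $u_n(G,x)\geq EU(V^{x,h^{(n)}}-G)\to EU(V^{x,h}-G)\geq u(G,x)-\e$ is justified and the rest of your argument goes through unchanged.
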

\begin{proof} The sequence
$u_n(G,x)$, $n\in\mathbb{N}$ is clearly non-decreasing and it is bounded from above by $u(G,x)$.
Let $\bar{h}_n:=(\widetilde{h}_0,\ldots,\widetilde{h}_n,0,\ldots)$, $n\in\mathbb{N}$ where $\widetilde{h}$ is the
optimizer constructed in Theorem \ref{csonti}. Using \eqref{isol} and $\widetilde{h} \in \ell_2$, we have $$
E\langle \bar{h}_n-\widetilde{h},\varepsilon-b\rangle^2\to 0,\ n\to\infty
$$
hence also $\langle \bar{h}_n,\varepsilon-b\rangle\to \langle \widetilde{h},\varepsilon-b\rangle$, $n \to \infty$ in probability.
The Fatou-lemma for $U^+$ shows that
$$
EU^+(V^{x,\widetilde{h}}-G)\leq \liminf_{n\to\infty}EU^+(V^{x,\bar{h}_n}-G).
$$
Now we show that the family $U^-(V^{x,\bar{h}_n}-G)$, $n\in\mathbb{N}$ is uniformly
integrable.  Assumption  \ref{queuneg2} implies that 
\begin{eqnarray*}
U^-(V^{x,\bar{h}_n}-G) &\leq &  C_3|V^{x,\bar{h}_n}-G|^{\gamma} +C_4 \\
&\leq&  C_3\left(2^{\gamma-1}(|x|^{\gamma}+ |<\bar{h}_n, \e-b>|^{\gamma})\right) +C_4.
\end{eqnarray*}
As $\widetilde{h}$ is optimal, $\|\bar{h}_n\|_{\ell_2}\leq\|\widetilde{h}\|_{\ell_2} \leq  M_{x,G}$ (see Lemma \ref{hborne}) and
as in Remark \ref{remui}, $U^-(V^{x,\bar{h}_n}-G)$, $n\in\mathbb{N}$ is uniformly
integrable. We also get as 
in \eqref{admiss}  that $$EU^-(V^{x,\bar{h}_n}-G)  \leq K$$ 
and $\bar{h}_n \in \mathcal{A}_n(G,U,x)$ follows. Uniform integrability implies that
$$
EU^-(V^{x,\widetilde{h}}-G)= \lim_{n\to\infty}EU^-(V^{x,\bar{h}_n}-G).
$$
It follows that
$$
u(G,x)=EU(V^{x,\widetilde{h}}-G)\leq \liminf_{n\to\infty}EU(V^{x,\bar{h}_n}-G)\leq \lim_{n\to\infty}u_n(G,x) \leq u(G,x).
$$
Let $h_n^*\in\mathcal{A}_n(U,G,x)$ be an optimal solution for \eqref{maxupetit}.  
Using Lemma \ref{hborne}, $\|{h}^*_n\|_{\ell_2} \leq M_{x,G}$. 
We proceed as in the proof of Theorem \ref{csonti}. By the Banach-Saks Property, 
there exists a subsequence $(n_k)_{k\geq 1}$ such that for 
$\widehat{h}_n:=\frac{1}{n}\sum_{k=1}^n h^*_{n_k},$ 
 $$\|\widehat{h}_n-\widehat h\|_{\ell_2}\to 0, \,n\to\infty$$ 
for some $\widehat h \in\ell_2$. The arguments of the proof of Theorem \ref{csonti} apply verbatim and show that $\widehat{h}$ is
an optimizer for the utility maximization problem \eqref{gnon} in the large market.
\end{proof}

\begin{remark}
When $U$ is strictly concave then the optimizer is unique and hence $h^{*}$ of Theorem \ref{csonti} equals $\widehat{h}$ of Theorem \ref{matural}. 
\end{remark}

The corollary below addressees the problem of convergence of the reservation prices $p_{n}$, $p$. These latter were introduced in
\cite{hodges-neuberger}. 

\begin{corollary}  
\label{prixut}
Assume that Assumptions \ref{un}, \ref{b},  \ref{AOAfini}, \ref{trois},  \ref{queuneg1}, \ref{queuneg2} and \ref{intG} hold true.
The reservation price $p_n$ (resp. $p$) of  $G$ in the market with the random sources $(\e_i)_{1 \leq i \leq n}$  (resp. with $(\e_i)_{ i \geq 1}$) 
is defined as a solution of 
\begin{eqnarray*}
u_n(G,x+p_n) =  u_n(0,x)  &\mbox{ and } &
u(G,x+p)  =  u(0,x).
\end{eqnarray*}
These quantities are well-defined and 
we have $p_n\to p$, $n\to\infty$.
\end{corollary}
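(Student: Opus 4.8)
The plan is to establish the convergence $p_n \to p$ by exploiting the monotonicity and continuity of the value functions $x \mapsto u_n(G,x)$ and $x \mapsto u(G,x)$, together with the pointwise convergence $u_n \uparrow u$ already proven in Theorem \ref{zut}. First I would record the basic structural facts: each $u_n(G,\cdot)$ and $u(G,\cdot)$ is concave (hence continuous on the interior of its domain) and \emph{strictly} increasing in $x$, the latter because $U$ is strictly increasing and the constant strategy $h=0$ is always admissible, so $u_n(G,x) \geq E U(x-G)$ which is strictly increasing in $x$ by Assumption \ref{intG}; an upper bound of the same flavour (using Lemma \ref{toutva} and Lemma \ref{miki}) shows $u_n(G,\cdot)$ and $u(G,\cdot)$ are real-valued and finite everywhere. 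Strict monotonicity and continuity give that the defining equations $u_n(G,x+p_n)=u_n(0,x)$ and $u(G,x+p)=u(0,x)$ have unique solutions $p_n$, $p$, so the reservation prices are well-defined.

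Next I would prove that $(p_n)_{n\geq 1}$ is bounded. Since $G \geq 0$ a.s.\ and $U$ is increasing, we have $u_n(G,y) \leq u_n(0,y)$ for every $y$, so $u_n(0,x) = u_n(G,x+p_n) \leq u_n(0,x+p_n)$, and strict monotonicity of $u_n(0,\cdot)$ forces $p_n \geq 0$. For an upper bound, I would use that $G$ is dominated in a suitable sense: actually the cleanest route is to note $u_n(G,\cdot) \leq u(G,\cdot)$ and $u_1(0,\cdot) \leq u_n(0,\cdot)$, so from $u_n(G, x+p_n) = u_n(0,x) \geq u_1(0,x)$ and the fact that $u(G, \cdot)$ is finite and increasing, one gets $x + p_n \leq u(G,\cdot)^{-1}(u_1(0,x))$ wait — more carefully, $u_1(0,x) = u_n(0,x) = u_n(G,x+p_n) \leq u(G,x+p_n)$, wait this bounds $u(G,x+p_n)$ from below, not above. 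Instead: use the upper bound $u_n(0,x) \leq u(0,x) < \infty$, hence $u_n(G,x+p_n) \leq u(0,x)$; but also $u_n(G, x+p_n) \geq E U(x + p_n - G)$, and by Assumption \ref{queuneg2} with the Jensen/moment estimate, $EU(x+p_n - G) \geq -C_3 E|x+p_n - G|^\gamma - C_4 \to +\infty$ is the wrong direction too. The right bound comes from Lemma \ref{toutva}: for the optimal $h$ in the market $n$, $EU(V^{x+p_n,h}-G) \leq |x_0| + E|V^{x+p_n,h}| \leq |x_0| + |x+p_n| + M_{x+p_n,G}\sqrt{1+\|b\|_{\ell_2}^2}$, but $M$ depends on $p_n$; a fixed bound is obtained by instead using that $u_n(0,x)$ is bounded \emph{above} uniformly in $n$ by $u(0,x)$, and $u_n(G,\cdot)$ grows at least like some fixed affine-below-a-power lower bound near a fixed reference point — concretely, compare with a single fixed admissible strategy (e.g.\ $h=0$ at wealth $x+p_n$): $u_n(G,x+p_n) \geq EU(x+p_n-G)$, and since $G\geq 0$, $EU(x+p_n - G) \geq U(x+p_n-\|G\|)$ if $G$ bounded, or more generally grows without bound in $p_n$ by monotone convergence, contradicting the uniform upper bound $u(0,x)$ once $p_n$ is too large. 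This pins $\sup_n p_n < \infty$.

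With $(p_n)$ bounded and $p_n \geq 0$, let $\bar p$ be any subsequential limit along $(n_j)$. The main obstacle — and the only genuinely delicate point — is passing to the limit in $u_{n_j}(G, x + p_{n_j}) = u_{n_j}(0,x)$. The right-hand side converges to $u(0,x)$ by Theorem \ref{zut}. For the left-hand side I would combine $u_{n_j}(G,\cdot) \uparrow u(G,\cdot)$ with an equicontinuity argument: the functions $u_n(G,\cdot)$ are concave and real-valued on $\R$, increasing to the finite concave limit $u(G,\cdot)$, so by Dini-type / convexity arguments the convergence is locally uniform on $\R$; combined with $p_{n_j} \to \bar p$ this yields $u_{n_j}(G,x+p_{n_j}) \to u(G, x+\bar p)$. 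Hence $u(G, x+\bar p) = u(0,x) = u(G,x+p)$, and strict monotonicity of $u(G,\cdot)$ gives $\bar p = p$. Since every subsequential limit equals $p$ and the sequence is bounded, $p_n \to p$. I would flag the local-uniform-convergence step (monotone convergence of finite concave functions to a finite concave function is locally uniform — a standard consequence of convexity, cf.\ Rockafellar) as the technical heart, and the uniform boundedness of $(p_n)$ as the other ingredient requiring a short but careful estimate via Lemma \ref{toutva} and Assumption \ref{intG}.
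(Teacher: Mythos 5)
Your approach to the convergence $p_n \to p$ is genuinely different from the paper's and, with the ``local uniform convergence of monotone sequences of finite concave functions'' lemma (a standard consequence of convexity, as you say), it works: boundedness of $(p_n)$, extraction of a convergent subsequence, locally uniform convergence of $u_n(G,\cdot)\uparrow u(G,\cdot)$ to pass to the limit in $u_{n_j}(G,x+p_{n_j})=u_{n_j}(0,x)$, then strict monotonicity to force $\bar p=p$. The paper instead argues by direct contradiction: it evaluates the optimal strategy $h^\dagger$ for wealth $x+(p+\underline p)/2$ at the larger wealth $x+p$ and uses strict monotonicity of $U$ to obtain $u(G,x+(p+\underline p)/2)<u(G,x+p)=u(0,x)=\lim u_n(0,x)=\lim u_n(G,x+p_n)\leq u(G,x+(p+\underline p)/2)$. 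The paper's route avoids invoking the equicontinuity/Dini-type lemma and avoids a separate boundedness step, so it is more self-contained; your route is more systematic and handles the unbounded subsequential limit possibility more explicitly, which the paper glosses over.

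However, there is a genuine gap in your well-definedness argument. You claim that $u_n(G,\cdot)$ and $u(G,\cdot)$ are strictly increasing because $u_n(G,x)\geq EU(x-G)$ and the latter is strictly increasing. A function dominating a strictly increasing function need not itself be strictly increasing, so this inference is invalid. More importantly, strict monotonicity and continuity alone do not give \emph{existence} of $p$: one must show that $u(0,x)$ lies in the range of the concave function $u(G,\cdot)$, i.e.\ that $u(G,-\infty)=-\infty$ and $u(G,\infty)=U(\infty)>u(0,x)$. The only nontrivial case is $U(\infty)<\infty$; there the paper establishes $u(0,x)<U(\infty)$ and $u(G,x)<U(\infty)$ by evaluating $EU$ at the finite optimal strategies $h',h''$ and using strict monotonicity of $U$ (display \eqref{kelmajd}), then obtains $u(G,\infty)=U(\infty)$ via Fatou and Assumption \ref{intG}, and $u(0,-\infty)=-\infty$ from concavity plus the fact that $u(0,\cdot)$ cannot be constant. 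None of this appears in your argument, and without it both strict monotonicity (when $U$ is bounded above) and the solvability of $u(G,x+p)=u(0,x)$ remain unproved. The boundedness-of-$(p_n)$ estimate you sketch, while presented somewhat tangled, can be salvaged exactly via this same machinery (monotone convergence pushes $EU(x+p_n-G)$ to $U(\infty)$, contradicting the bound $u(0,x)<U(\infty)$), so fixing the well-definedness gap would also clean up that step.
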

\begin{proof} We justify the definition of $p$, the case of $p_{n}$ being completely analogous.
We show that the set $\{u(G,x),  \,x\in\mathbb{R}\}$ is the same as $\{u(0,x),  \,x\in\mathbb{R}\}$.

We claim that $u(G,x)$, $u(0,x)$ are finite for all $x$.
Indeed, Assumption \ref{intG}, Lemmata \ref{toutva} and \ref{hborne} imply that $-\infty<u(G,x)\leq u(0,x)<\infty$.
{}
As $u$ is monotone, furthermore it is concave and thus continuous on its effective domain, it suffices to show that 
\begin{equation}\label{shovel}
u(G,-\infty)=u(0,-\infty)=-\infty,\ u(G,\infty)=u(0,\infty)=U(\infty)
\end{equation}
and that $u(G,x),u(0,x)<U(\infty)$ for all $x$ because in this case 
$\{u(G,x),  \,x\in\mathbb{R}\}=\{u(0,x),  \,x\in\mathbb{R}\}=(-\infty,U(\infty))$. 

We first concentrate on the latter claim. If $U(\infty)=\infty$ then this is obvious. Otherwise 
denote $h'$, $h''$ the strategies attaining $u(0,x)$, $u(G,x)$, respectively. Then, by the strictly increasing
property of $U$, we have 
\begin{equation}\label{kelmajd}
u(0,x)=EU(x+\langle h',\varepsilon-b\rangle)<EU(\infty)=U(\infty)	
\end{equation} and 
\begin{equation*}\label{kelmajd2}
u(G,x)=EU(x+\langle h'',\varepsilon-b\rangle-G)<EU(\infty)=U(\infty).
\end{equation*}

Now we turn to showing \eqref{shovel}.
It is clear that $u(G,\infty),u(0,\infty)\leq U(\infty)$ and 
\begin{equation}\label{romania}
u(0,\infty)=\lim_{x\to\infty}u(0,x)\geq \lim_{x\to\infty}U(x)=U(\infty).{}
\end{equation}
Assumption \ref{intG} and Fatou's lemma also imply that $$
u(G,\infty) \geq \liminf_{x\to\infty}u(G,x)\geq\liminf_{x\to\infty}EU(x-G)\geq U(\infty).
$$

Since $u(G,x)\leq u(0,x)$, it is enough to establish $\lim_{x\to -\infty}u(0,x)=-\infty$.
By concavity, this is clearly the case if $u(0,\cdot)$ is not the constant function. But if $u(0,\cdot)=c$ then 
we would necessarily have $c\geq U(\infty)$ by \eqref{romania} which contradicts \eqref{kelmajd}.

We now turn to proving convergence. Arguing by contradiction let us assume that, along a subsequence (which we continue to denote by $n$),
one has $p_n\to \underline{p}$ for some $\underline{p}<p$ (the case of a limit $\overline{p}>p$ is
analogous). It follows that there is $N$ such that, for $n\geq N$, $p_n<(p+\underline{p})/2<p$.
Using Theorem \ref{csonti}, let $h^{\dagger}\in \mathcal{A}(G,U,x+(p+\underline{p})/2) \subset \mathcal{A}(G,U,x+p)$ satisfy
$$
u(G,x+(p+\underline{p})/2)=EU(x+(p+\underline{p})/2+\langle h^{\dagger},\varepsilon-b\rangle -G).
$$
Then, the definition of the reservation prices and Theorem \ref{zut} imply that 
\begin{eqnarray*}
& & \limsup_{n\to\infty}u_n(G,x+p_n)\leq \limsup_{n\to\infty}u_n(G,x+(p+\underline{p})/2)\\
& =& u(G,x+(p+\underline{p})/2)= EU(x+(p+\underline{p})/2+\langle h^{\dagger},\varepsilon-b\rangle-G)\\
&<& EU(x+p+\langle h^{\dagger},\varepsilon-b\rangle-G)\leq
u(G,x+p)\\
&=& u(0,x)=\lim_{n\to\infty}u_n(0,x) = \lim_{n\to\infty}u_n(G,x+p_n),
\end{eqnarray*}
a gross contradiction.
\end{proof}

\subsection*{Acknowledgments}

M.R. was supported by the National Research, Development and Innovation Office, Hungary [Grant KH 126505] and
by the ``Lend\"ulet'' programme of the Hungarian Academy of Sciences [Grant LP 2015-6].

\end{document}